\tikzset{
    >=triangle 45,
    box/.style={draw = black, rectangle, rounded corners, inner sep=4pt,fill=white,text=black, text centered},
    arro/.style={line width=1pt,gray,->,shorten <=2pt,shorten >=1pt}
}
\newcommand{\idop}{\mathds{1}}
\newcommand{\be}{\begin{equation}}
\newcommand{\ee}{\end{equation}}
\def\reff#1{(\protect\ref{#1})}
\newtheorem{lemma}{Lemma}
\newtheorem{pros}{Proposition}[section]
\begin{document}
\title{Solution for a bipartite Euclidean traveling-salesman problem in one dimension}
\author{Sergio Caracciolo}\email{sergio.caracciolo@mi.infn.it}
\affiliation{Dipartimento di Fisica, University of Milan and INFN, via Celoria 16, 20133 Milan, Italy}
\author{Andrea Di Gioacchino}\email{andrea.digioacchino@unimi.it}
\affiliation{Dipartimento di Fisica, University of Milan and INFN, via Celoria 16, 20133 Milan, Italy}
\author{Marco Gherardi}\email{marco.gherardi@mi.infn.it}
\affiliation{Dipartimento di Fisica, University of Milan and INFN, via Celoria 16, 20133 Milan, Italy}
\author{Enrico M. Malatesta}\email{enrico.m.malatesta@gmail.com}
\affiliation{Dipartimento di Fisica, University of Milan and INFN, via Celoria 16, 20133 Milan, Italy}
	



\date{\today}
\begin{abstract}
The traveling salesman problem is one of the most studied combinatorial optimization problems, because of the simplicity in its statement and the difficulty in its solution.
We characterize the optimal cycle for every convex and increasing cost function when the points are thrown independently and with an identical probability distribution in a compact interval. We compute the average optimal cost for every number of points when the distance function is the square of the Euclidean distance.  We also show that the average optimal cost is not a self-averaging quantity by explicitly computing the variance of its distribution in the thermodynamic limit. Moreover, we prove that the cost of the optimal cycle is not smaller than twice the cost of the optimal assignment of the same set of points. Interestingly, this bound is saturated in the thermodynamic limit. 
\end{abstract}
\maketitle

\section{Introduction}

Given $N$ cities and $N (N-1)/2$ values that represent the cost paid for traveling between all pairs of them, the traveling salesman problem (TSP) consists in finding the tour that visits all the cities and finally comes back to the starting point with the least total cost to be paid for the journey.
The TSP is the archetypal problem in combinatorial optimization~\cite{lawler1985}. Its first formalization can be probably traced back to the Austrian mathematician Karl Menger, in the 1930s~\cite{menger1932}, but it is yet extensively investigated. As it belongs to the class of NP-complete problems, see Karp and Steele in~\cite{lawler1985}, the study of the TSP could shed light on the famous P vs NP problem~\footnote{http://www.claymath.org/millennium-problems/p-vs-np-problem}.
Many problems in various fields of science (computer science, operational research, genetics, engineering, electronics and so on) and in everyday life (lacing shoes, Google maps queries, food deliveries and so on) can be mapped on a TSP or a variation of it, see for example Ref.~\cite[Chap.~3]{reinelt1994}  for a non-exhaustive list.
Interestingly, the complexity of the TSP seems to remain high even if we try to modify the problem. For example, the Euclidean TSP, where the costs to travel from cities are the Euclidean distances between them, remains NP-complete~\cite{papadimitriou1977}. The bipartite TSP, where the cities are divided in two sub-sets and the tour has to alternate between them, is NP-complete too, as its Euclidean counterpart. 
It is well known that the statistical properties of the optimal solution of problems in combinatorial optimization can be related to the zero temperature behaviour of corresponding disordered statistical mechanics models~\cite{Kirkpatrick1983,Sourlas1986,mezard1987spin,mezard2009information} when a class of problems is defined and a probability distribution for the different instances is precised. 

Previous investigations of some of us suggested that the Euclidean matching problem is simpler to deal with in its bipartite version. This idea encouraged us to consider the bipartite TSP, starting from the one dimensional case that is fully analyzed here.

The  manuscript is organized as follows: in Sect.~\ref{sec:cost} we define the  TSP and its variants we are interested in.  We shall introduce a representation of the model, which is novel as far as we know, in terms of a couple of permutations.
In this way we also establish a very general connection between the bipartite TSP and a much simpler model, which is in the P complexity class, the assignment problem.
Always using our representation, in Sect.~\ref{sec:sol} we can provide the explicit solution of the problem for every instance of the disorder (that is, for every position of the points) in the one dimensional case when the cost is a convex and increasing function of the Euclidean distance between the cities. 
In Sect.~\ref{sec:costev} we exploit our explicit solution to compute the average optimal cost for an arbitrary number of points, when they are chosen with uniform distribution in the unit interval, and we present a comparison with the results of numerical simulations.
In Sect.~\ref{sec:asymptotic} we discuss the behaviour of the cost in the thermodynamic limit of an infinite number of points. 
Here the results can be extended to more general distribution laws for the points.
In Sect.~\ref{sec:conclusions} we give our conclusions.

\section{The Model}\label{sec:cost}

Given a generic (undirected) graph $\mathcal{G} = (\mathcal{V}, \mathcal{E})$, a {\em cycle}  of length $k$ is a sequence of edges $e_1, e_2, \dots, e_k \in \mathcal{E}$ in which two subsequent edges $e_i$ and $e_{i+1}$ share a vertex for $i=1, \dots, k$ where, for $i=k$ the edge $e_{k+1}$ must be identified with the edge $e_1$. On a bipartite graph each cycle must have an even length. The cycle is {\em Hamiltonian} when the visited vertices are all different and the cardinality of the set of vertices $|\mathcal{V}|$ is exactly $k$ for $k>2$. In other terms, a Hamiltonian cycle is a closed path visiting all the vertices in $\mathcal{V}$ only once. The determination of the existence of an Hamiltonian cycle is an NP-complete problem (see Johnson and Papadimitriou in~\cite{lawler1985}).
A graph that contains a Hamiltonian cycle is called a Hamiltonian graph.
The complete graph with $N$ vertices $\mathcal{K}_N$ is Hamiltonian for $N>2$.
The bipartite complete graph with $N+M$ vertices $\mathcal{K}_{N,M}$ is Hamiltonian for $M=N>1$.

Let us denote by $\mathcal H$ the set of Hamiltonian cycles of the graph $\mathcal{G}$. Let us suppose now that a weight $w_e > 0$ is assigned to each edge $e \in \mathcal{E}$ of the graph $\mathcal{G}$. We can associate to each Hamiltonian cycle $h\in \mathcal{H}$ a total cost
\be
E(h) :=  \sum_{e\in h} w_e \, .\label{E}
\ee
In the (weighted) Hamiltonian cycle problem we search for the Hamiltonian cycle $h\in \mathcal{H}$ such that the total cost in~\reff{E} is minimized, i.e., the optimal Hamiltonian cycle $h^*\in \mathcal{H}$ is such that
\be
E(h^*) = \min_{ h\in \mathcal{H}} E(h)\, . \label{h^*}
\ee
When the $N$ vertices of $\mathcal{K}_N$ are seen as cities and the weight for each edge is the cost paid to cover the route distance between the cities, the search for $h^*$ is called the {\em traveling salesman problem} (TSP). For example, consider when the graph $\mathcal{K}_N$ is embedded in $\mathbb{R}^d$, that is for each $i\in [N] =\{1,2,\dots,N\}$ we associate a point $x_i\in \mathbb{R}^d$,  and for $e=(i,j)$ with $i,j \in [N]$ we introduce a cost which is a function of their Euclidean distance $w_e = |x_i-x_j|^p$ with $p\in \mathbb{R}$. When $p=1$, we obtain the usual \emph{Euclidean} TSP.
Analogously for the bipartite graph $\mathcal{K}_{N,N}$ we will have two sets of points in $\mathbb{R}^d$, that is the red $\{r_i\}_{i\in [N]}$ and the blue $\{b_i\}_{i\in [N]}$ points and the edges connect red with blue points with a cost
\be
w_e = |r_i-b_j|^p \, . \label{pb}
\ee
When $p=1$, we obtain the usual \emph{bipartite} Euclidean TSP.
The simplest way to introduce randomness in the problem is to consider the weights $w_e$ independent and identically distributed random variables. In this case the problem is called {\em random} TSP and has been extensively studied by disordered system techniques such as replica and cavity methods ~\cite{Vannimenus1984,Orland1985,Sourlas1986,Mezard1986,Mezard1986a,Krauth1989,Ravanbakhsh2014} and by a rigorous approach \cite{Wastlund}. In the random Euclidean TSP~\cite{BHH,Steele,Karp,Percus1996,Cerf1997}, instead, the positions of the points are generated at random and as a consequence the weights will be correlated.
The typical properties of the optimal solution are of interest, and in particular the average optimal cost
\be
\overline{E}  := \overline{E(h^*)}\,,
\ee
where we have denoted by a bar the average over all possible realization of the disorder.

\subsection{Representation in terms of permutations}\label{sec:model}

We shall now restrict to the complete bipartite graph $\mathcal{K}_{N,N}$. Let $\mathcal{S}_N$ be the group of permutation of $N$ elements. For each $\sigma, \pi\in  \mathcal{S}_N$, the sequence for $i\in [N]$
\be
\begin{aligned}
e_{2i-1} = & \, (r_{\sigma(i)}, b_{\pi(i)} )  \\
e_{2i} = & \, (b_{\pi(i)}, r_{\sigma(i+1)}) \label{corris}
\end{aligned}
\ee
where $\sigma(N+1)$ must be identified with $\sigma(1)$, defines a Hamiltonian cycle.
More properly, it defines a Hamiltonian cycle with starting vertex $r_1=r_{\sigma(1)}$ with a particular orientation, that is 
\be
h[(\sigma, \pi)] := (r_1 b_{\pi(1)} r_{\sigma(2)} b_{\pi(2)} \cdots r_{\sigma(N)} b_{\pi(N)}) 
= (r_{1} C ) \,,
\ee
where $C$ is an open walk which visit once all the blue points and all the red points with the exception of $r_1$. Let $C^{-1}$ be the open walk in opposite direction. This defines a new, dual, couple of permutations which generate the same Hamiltonian cycle
\be
h[(\sigma, \pi)^\star] := (C^{-1} r_1) = (r_1 C^{-1} ) = h[(\sigma, \pi)]\, ,
\ee
since the cycle $(r_1 C^{-1} )$ is the same as $(r_1 C)$ (traveled in the opposite direction).
By definition
\begin{align}
\begin{split}
&h[(\sigma, \pi)^\star] \\
& = (r_1 b_{\pi(N)} r_{\sigma(N)} b_{\pi(N-1)} r_{\sigma(N-1)} \cdots b_{\pi(2)}  r_{\sigma(2)} b_{\pi(1)} ) \, .
\end{split}
\end{align}
Let us introduce the cyclic permutation $\tau \in \mathcal{S}_N$, which performs a left rotation, and the inversion $I \in \mathcal{S}_N$. That is $\tau(i) = i+1$ for $i\in [N-1]$ with  $\tau(N) = 1$ and $I(i) = N+1 -i$. In the following we shall denote a permutation by using the second raw in the usual two-raw notation, that is, for example $\tau = (2, 3, \cdots ,N, 1)$ and $I= (N, N-1, \dots, 1)$. Then
\be
h[(\sigma, \pi)^\star] =  h[( \sigma \circ \tau \circ I, \pi \circ I)] \, . \label{d1}
\ee
There are $N! \, (N-1)!/2$ Hamiltonian cycles for $\mathcal{K}_{N,N}$.
Indeed the couples of permutations are $(N!)^2$ but we have to divide them by $2N$ because of the $N$ different starting points and the two directions in which the cycle can be traveled. 


\subsection{Comparison with the assignment problem}

From~\reff{corris} and weights of the form~\reff{pb}, we get an expression for the total cost
\begin{align}\label{costgen}
\begin{split}
& E[h[(\sigma, \pi)]] \\
&=  \sum_{i \in [N]}\left[  |r_{\sigma(i)} - b_{\pi(i)}|^p + |r_{\sigma \circ \tau(i)} - b_{\pi(i)}|^p \right] \, .
\end{split}
\end{align} 
Now we can re-shuffle the sums and we get
\begin{align}
\begin{split}
& E[h[(\sigma, \pi)]] \\
& = \sum_{i \in [N]}  |r_{i} - b_{\pi \circ \sigma^{-1}(i)}|^p + 
 \sum_{i \in [N]}  |r_{i} - b_{\pi \circ \tau^{-1} \circ \sigma^{-1}(i)}|^p \\
& = E[m(\pi \circ \sigma^{-1})] + E[m(\pi \circ \tau^{-1} \circ \sigma^{-1})] \label{dec}
\end{split}
\end{align}
where $E[m(\lambda)]$ is the total cost of the assignment $m$ in $\mathcal{K}_{N,N}$ associated to the permutation $\lambda\in \mathcal{S}_N$
\be
E[m(\lambda)] = \sum_{i \in [N]}  |r_{i} - b_{\lambda(i)}|^p \, .
\ee
The duality transformation~\reff{d1}, that is
\begin{align}
\sigma \, \to & \;  \sigma \circ \tau \circ I \\
\pi \, \to & \; \pi \circ I \,,
\end{align}
interchanges the two matchings because
\begin{subequations}
\begin{align}
\begin{split}
\mu_1 := \pi \circ \sigma^{-1} \, \to & \;  \pi \circ I \circ I \circ \tau^{-1} \circ \sigma^{-1} \\
&= \pi \circ \tau^{-1} \circ \sigma^{-1}
\end{split}
\\
\begin{split}
\mu_2 := \pi \circ \tau^{-1} \circ \sigma^{-1}  \, \to & \; \pi \circ I \circ \tau^{-1} \circ I \circ \tau^{-1} \circ \sigma^{-1} \\
&
= \pi \circ \sigma^{-1} 
\end{split}
\end{align}
\end{subequations}
where we used 
\be
 I \circ \tau^{-1} \circ I = \tau \, .\label{Itau}
\ee
The two matchings corresponding to the two permutations $\mu_1$ and $\mu_2$ have no edges in common and therefore each vertex will appear twice in the union of their edges. Remark also that
\be
\mu_2 = \mu_1 \circ \sigma \circ \tau^{-1} \circ \sigma^{-1}
\ee
which means that $\mu_1$ and $\mu_2$ are related by a permutation which has to be, as it is $\tau^{-1}$, a unique cycle of length $N$. It follows that, if $h^*$ is the optimal Hamiltonian cycle and $m^*$ is the optimal assignment,
\be
E[h^*] \; \geq \; 2 \, E[m^*] \, . \label{stima}
\ee
In the case of the Euclidean assignment the scaling of the average optimal cost is known in every dimensions and for every $p>1$ \cite{Caracciolo:158}:
\be
\overline{E[\mu^*]} \sim
\begin{cases}
N^{1-\frac{p}{2}} & d=1\, ; \\
N^{1-\frac{p}{2}} (\log N)^{\frac{p}{2}} & d=2\, ; \\
N^{1-\frac{p}{d}} & d>2\,.
\end{cases}
\ee
The scaling shows an anomalous behaviour at lower dimension differently from what occurs for the matching problem on the complete graph $\mathcal{K}_N$ where in any dimension the scaling with the number of points is always $N^{1-\frac{p}{d}}$. Indeed, also for the monopartite Euclidean TSP (that is  on $\mathcal{K}_N$) in~\cite{BHH} it has been shown that for $p=1$, in a finite region, with probability 1, the total cost scales according to $N^{1-\frac{p}{d}}$ in any dimension.

\section{Solution in $d=1$ for all instances}\label{sec:sol}

Here we shall concentrate on the one-dimensional case, where both red and blue points are chosen uniformly in the unit interval $[0,1]$.
In our analysis we shall make use of the results for the Euclidean assignment problem in one dimension of~\cite{Caracciolo:159} which have been obtained when in~\reff{pb} is set $p>1$. In this work it is showed that sorting both red and blue points in increasing order, the optimal assignment is defined by the identity permutation $\idop=(1,2,\dots, N)$. 
From now on, we will assume $p>1$ and that both red and blue points are ordered, i.e. $r_1 \le \dots \le r_N$ and $b_1 \le \dots \le b_N$.
Let 
\be
\tilde{\sigma}(i) = 
\begin{cases}
2i-1 & i \leq  (N+1)/2 \\
2N -2i +2 & i > (N+1)/2 \label{sigmatilde}
\end{cases}
\ee
and
\begin{equation}
\tilde{\pi}(i) = \tilde{\sigma}\circ I(i) = \tilde{\sigma}(N+1-i)  =
\begin{cases}
2i  & i < (N +1)/2 \\
2N - 2i +1 & i \geq  (N +1)/2 \label{pitilde}
\end{cases}
\end{equation}
the couple $(\tilde{\sigma}, \tilde{\pi})$  will define a Hamiltonian cycle $\tilde{h}\in \mathcal{H}$. More precisely, according to the correspondence given in~\reff{corris}, it contains the edges
for even $N$, 
\begin{subequations}
\begin{align}
\tilde{e}_{2i-1} = & \,  
\begin{cases}
(r_{2i-1}, b_{2i})   & i \leq  N/2 \\
(r_{2N-2i+2}, b_{2N-2i +1})    & i > N/2
\end{cases} \\
\tilde{e}_{2i} = & \,  \begin{cases}
(b_{2i}, r_{2i+1})   & i <  N/2 \\
(b_{N}, r_{N}) & i = N/2 \\
(b_{2N-2i+1}, r_{2N-2i})    &  N/2< i < N \\
(b_{1}, r_{1})    &  i = N
\end{cases}
\end{align}
\end{subequations}
while for  $N$ odd
\begin{subequations}
\begin{align}
\tilde{e}_{2i-1} = & \,  
\begin{cases}
(r_{2i-1}, b_{2i})   & i <  (N-1)/2 \\
(r_{N}, b_{N}) & i = (N-1)/2 \\
(r_{2N-2i+2}, b_{2N-2i +1})    & i > (N-1)/2
\end{cases} \\
\tilde{e}_{2i} = & \,  \begin{cases}
(b_{2i}, r_{2i+1})   & i <  (N-1)/2 \\
(b_{2N-2i+1}, r_{2N-2i})    &  (N-1)/2< i < N \\
(b_{1}, r_{1})    &  i = N \, .
\end{cases}
\end{align}
\end{subequations}

\begin{figure}
	\centering
	\includegraphics[width=0.5\columnwidth]{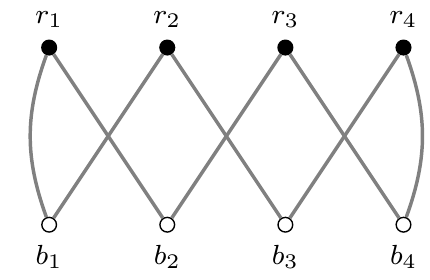}
\caption{The optimal Hamiltonian cycle $\tilde{h}$ for $N=4$ blue and red points chosen in the unit interval and sorted in increasing order.} \label{exh}
\end{figure}

The main ingredient of our analysis is the following
\begin{pros}
For a convex and increasing cost function the optimal Hamiltonian cycle is provided by $\tilde{h}$.
\end{pros}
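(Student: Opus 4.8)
The plan is to work with the decomposition of the cycle cost into two matching costs established in Eq.~\reff{dec}, namely $E[h[(\sigma,\pi)]] = E[m(\mu_1)] + E[m(\mu_2)]$ with $\mu_1=\pi\circ\sigma^{-1}$ and $\mu_2=\pi\circ\tau^{-1}\circ\sigma^{-1}$, bearing in mind that the admissible pairs $(\mu_1,\mu_2)$ are exactly those that are edge-disjoint and whose union is a single Hamiltonian cycle rather than a generic $2$-factor; by the construction of Sec.~\ref{sec:model} this corresponds to $\mu_1^{-1}\circ\mu_2$ being a single $N$-cycle. The whole problem thus becomes: minimize a sum of two assignment costs over this constrained set of matching pairs.

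The engine of the argument is the rearrangement (Monge) inequality. Since both point sets are sorted and the cost is $f$ of the Euclidean distance with $f$ convex and increasing (the case of interest being $f(x)=x^p$, $p>1$), one has, for $r_a\le r_b$ and $b_c\le b_d$,
\be
f(|r_a-b_c|)+f(|r_b-b_d|)\le f(|r_a-b_d|)+f(|r_b-b_c|),
\ee
so that any \emph{crossing} of two edges belonging to the same matching can be removed without increasing the cost. Applied to a single matching this already reproduces the optimality of the identity for the assignment problem of Ref.~\cite{Caracciolo:159}; here I would use it as the elementary move that drives a generic Hamiltonian cycle toward $\tilde h$.

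Next I would show that $\tilde h$ is precisely the Hamiltonian cycle whose two matchings are simultaneously ``as uncrossed as the single-cycle constraint allows''. Reading off $\mu_1$ and $\mu_2$ from \reff{sigmatilde}--\reff{pitilde}, each red point $r_i$ is joined only to its index-neighbours $b_{i-1}$ and $b_{i+1}$, with the natural truncation $r_1\to\{b_1,b_2\}$ and $r_N\to\{b_{N-1},b_N\}$; I would argue that this is the unique connected $2$-factor containing no removable crossing. Starting from an arbitrary optimal cycle, one then repeatedly applies the uncrossing move above and shows that the procedure must terminate at $\tilde h$.

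The hard part is that the uncrossing must be carried out \emph{without splitting the single Hamiltonian cycle into two shorter cycles}: the naive bipartite $2$-opt that swaps the blue endpoints of two crossing edges respects the red--blue alternation but, as one checks on the cyclic order, disconnects the tour, while the reconnection that would keep one cycle creates a forbidden red--red (or blue--blue) edge. I would therefore accommodate the connectivity constraint either (i) by an induction/peeling argument on the number of points --- first proving that in any optimal cycle the globally leftmost point is joined to its two nearest admissible neighbours, a purely local convexity statement at the boundary, then deleting that point and its partners and recursing --- or (ii) by promoting the local move to a cost-non-increasing three-edge exchange chosen so as to preserve the single cycle. Establishing that such a cost-reducing move always exists whenever the cycle differs from $\tilde h$, while $\tilde h$ itself admits none, is the crux on which the whole proof turns.
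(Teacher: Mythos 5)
You have correctly identified both the right tool (the Monge/rearrangement inequality for a convex increasing cost) and the central obstruction (a naive uncrossing of two edges of the same matching generally splits the tour into two cycles). However, what you yourself call ``the crux on which the whole proof turns'' --- that a cost-decreasing, cycle-preserving move exists whenever the tour differs from $\tilde h$, and that $\tilde h$ admits none --- is exactly what is missing: you propose two possible routes, (i) and (ii), but execute neither, and the claim that $\tilde h$ is ``the unique connected $2$-factor containing no removable crossing'' is asserted rather than proved (indeed ``removable crossing'' is never defined in a way compatible with the connectivity constraint). As it stands the proposal is a proof plan whose key lemma is unproven, so there is a genuine gap.

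For comparison, the paper resolves the difficulty by acting on the \emph{cycle} rather than on the individual matchings: the operators $R_{ij}$ and $S_{ij}$ delete two edges, reverse the sub-walk between them, and reconnect, which automatically returns a single Hamiltonian cycle --- essentially your option (ii), made concrete as a 2-opt path reversal rather than a three-edge exchange. The cost difference under such a move is exactly a two-edge matching comparison, so the Monge inequality gives its sign in terms of the relative orientation of the two red and two blue endpoints involved. The global minimizer must therefore have every such pair in the ``same orientation'', and a separate inductive argument (fixing $\sigma(1)=1$, first forcing $b_1$ to be adjacent to $r_1$, then peeling off extremal points one at a time --- close in spirit to your option (i)) shows that $(\tilde\sigma,\tilde\pi)$ and its dual are the only configurations satisfying all of these orientation constraints simultaneously. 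Both steps require real work and neither follows from the observation that $\tilde h$ looks uncrossed. Note also that no termination argument for an iterative improvement procedure is needed: the minimum over the finite set $\mathcal H$ exists, must be a fixed point of all the moves, and the uniqueness of that fixed point then concludes.
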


This cycle is the analogous of the \emph{criss-cross} solution introduced by~Halton~\cite{halton1995} (see Fig.~\ref{exh}). 
In his work, Halton studied the optimal way to lace a shoe. This problem can be seen as a peculiar instance of a 2-dimensional bipartite Euclidean TSP with the parameter which tunes the cost $p=1$.
One year later, Misiurewicz~\cite{misiurewicz1996} generalized Halton's result giving the least restrictive requests on the 2-dimensional TSP instance to have the criss-cross cycle as solution.
Other generalizations of these works have been investigated in more recent papers~\cite{polster2002,garcia2017}. We will show that the same criss-cross cycle has the lowest cost for the Euclidean bipartite TSP in one dimension, provided that $p>1$. 
To do this, we will prove in a novel way the optimality of the criss-cross solution, suggesting two moves that lower the energy of a tour and showing that the only Hamiltonian cycle that cannot be modified by these moves is $\tilde{h}$.

We shall make use of the following moves in the ensemble of Hamiltonian cycles.
Given $i, j\in [N]$ with $j>i$ we can partition each cycle as
\be
h[(\sigma, \pi)] = (C_1 r_{\sigma(i)} b_{\pi(i)} C_2 b_{\pi(j)} r_{\sigma(j+1)} C_3),
\ee
where the $C_i$ are open paths in the cycle, and we can define the operator $R_{ij}$ that exchanges two blue points $b_{\pi(i)}$ and $b_{\pi(j)}$ and reverses the path between them as
\begin{align}
\begin{split}
h[R_{ij} (\sigma, \pi)] &:=   (C_1 r_{\sigma(i)} [b_{\pi(i)} C_2 b_{\pi(j)}]^{-1} r_{\sigma(j+1)} C_3) \\
& = (C_1 r_{\sigma(i)} b_{\pi(j)} C_2^{-1} b_{\pi(i)} r_{\sigma(j+1)} C_3)  \,.
\end{split}
\end{align}
Analogously by writing
\be
h[(\sigma, \pi)] = (C_1 b_{\pi(i-1)} r_{\sigma(i)}  C_2  r_{\sigma(j)} b_{\pi(j)} C_3)
\ee
we can define the corresponding operator $S_{ij}$ that exchanges two red points $r_{\sigma(i)}$ and $r_{\sigma(j)}$ and reverses the path between them
\begin{align}
\begin{split}
h[S_{ij} (\sigma, \pi)] &:=   (C_1 b_{\pi(i-1)} [r_{\sigma(i)} C_2 r_{\sigma(j)}]^{-1} b_{\pi(j)} C_3) \\
& = (C_1  b_{\pi(i-1)} r_{\sigma(j)} C_2^{-1}  r_{\sigma(i)} b_{\pi(j)} C_3) \, .
\end{split}
\end{align}
Two couples of points $(r_{\sigma(k)}, r_{\sigma(l)})$ and $(b_{\pi(j)}, b_{\pi(i)})$ have the same orientation 
if $(r_{\sigma(k)} -r_{\sigma(l)})(b_{\pi(j)}-b_{\pi(i)}) >0$. Remark that as we have ordered both set of points this means also that $(\sigma(k), \sigma(l))$ and $(\pi(j), \pi(i))$ have the same orientation.

Then
\begin{lemma}\label{lemma1}
Let $E[(\sigma, \pi)]$ be the cost defined in~\reff{costgen}. Then 
$E[R_{ij}(\sigma, \pi)] - E[(\sigma, \pi)] >0$ if the couples $(r_{\sigma(j+1)}, r_{\sigma(i)})$ and $(b_{\pi(j)}, b_{\pi(i)})$ have the same orientation and
 $E[S_{ij}(\sigma, \pi)] - E[(\sigma, \pi)] >0$ if the couples $(r_{\sigma(j)}, r_{\sigma(i)})$ and $(b_{\pi(j)}, b_{\pi(i-1)})$ have the same orientation.
\end{lemma}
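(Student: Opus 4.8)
The plan is to reduce both inequalities to a single four-point statement about convex functions, and then to check that the orientation hypotheses select exactly the right configuration in that statement.

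First I would identify, for each move, which edges are destroyed and which are created, observing that every edge internal to the reversed path contributes identically to both cycles (the cost $f(x):=|x|^p$ is even, so $C_2$ and $C_2^{-1}$ have the same internal cost) and therefore cancels. For $R_{ij}$ only the two boundary edges change, so
\[ E[R_{ij}(\sigma,\pi)] - E[(\sigma,\pi)] = f\!\left(r_{\sigma(i)} - b_{\pi(j)}\right) + f\!\left(r_{\sigma(j+1)} - b_{\pi(i)}\right) - f\!\left(r_{\sigma(i)} - b_{\pi(i)}\right) - f\!\left(r_{\sigma(j+1)} - b_{\pi(j)}\right), \]
and for $S_{ij}$ the analogous expression holds with the two reds $r_{\sigma(i)},r_{\sigma(j)}$ and the two blues $b_{\pi(i-1)},b_{\pi(j)}$ in the four slots.

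Next I would isolate the core inequality: for a strictly convex even $f$ and reals $x_1<x_2$, $y_1<y_2$,
\[ f(x_1 - y_2) + f(x_2 - y_1) > f(x_1 - y_1) + f(x_2 - y_2). \]
To prove it I set $d_{ab}:=x_a-y_b$ and note the identity $d_{11}+d_{22}=d_{12}+d_{21}$ together with the chains $d_{12}<d_{11}<d_{21}$ and $d_{12}<d_{22}<d_{21}$, which hold precisely because $x_1<x_2$ and $y_1<y_2$. Hence the outer pair $(d_{21},d_{12})$ strictly majorizes the inner pair while carrying the same total sum, so by Karamata's inequality (equivalently, writing each inner value as a strict convex combination of the two outer ones) the strict convexity of $f=|\cdot|^p$ for $p>1$ gives the strict inequality.

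Finally I would match the orientation hypotheses to this inequality. For $R_{ij}$ the assumption that $(r_{\sigma(j+1)},r_{\sigma(i)})$ and $(b_{\pi(j)},b_{\pi(i)})$ have the same orientation means $(r_{\sigma(j+1)}-r_{\sigma(i)})(b_{\pi(j)}-b_{\pi(i)})>0$; relabeling the smaller red as $x_1$ and the smaller blue as $y_1$ places the two destroyed edges into the \emph{parallel} configuration $f(x_1-y_1)+f(x_2-y_2)$ and the two created edges into the \emph{crossed} one $f(x_1-y_2)+f(x_2-y_1)$, so the core inequality yields $E[R_{ij}(\sigma,\pi)]>E[(\sigma,\pi)]$. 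The statement for $S_{ij}$ follows by the red--blue symmetry of the construction, applying the identical argument to the orientation condition on $(r_{\sigma(j)},r_{\sigma(i)})$ and $(b_{\pi(j)},b_{\pi(i-1)})$. The main obstacle I anticipate is bookkeeping rather than analysis: one must verify that only the four boundary edges change under each reversal, and that the orientation condition is exactly what forces the destroyed pair to be parallel and the created pair to be crossed; once this correspondence is pinned down, the inequality is the standard convexity/majorization fact, and strictness is automatic for $p>1$ since same orientation forces $x_1\neq x_2$ and $y_1\neq y_2$.
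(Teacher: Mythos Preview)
Your proof is correct and follows the same route as the paper: both compute the cost difference, observe that only the four boundary edges change, and reduce the sign of that difference to the two--point rearrangement inequality for convex costs. The only distinction is that the paper invokes the cited results of McCann and of Caracciolo et al.\ for that last step, whereas you supply a self-contained Karamata/majorization argument; this makes your version slightly more explicit but not methodologically different.
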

\begin{proof}
\begin{multline}
\begin{split}
E[R_{ij}(\sigma, \pi)] - E[(\sigma, \pi)] &= w_{(r_{\sigma(i)}, b_{\pi(j)})}  + w_{(b_{\pi(i)}, r_{\sigma(j+1)})} \\
&- w_{(r_{\sigma(i)}, b_{\pi(i)}) }- w_{(b_{\pi(j)}, r_{\sigma(j+1)})}
\end{split}
\end{multline}
and this is the difference between two matchings which is positive if the couples $(r_{\sigma(j+1)}, r_{\sigma(i)})$ and $(b_{\pi(j)}, b_{\pi(i)})$ have the same orientation (as shown in~\cite{McCannRobert1999, Caracciolo:159} for a weight which is an increasing convex function of the Euclidean distance). 
The remaining part of the proof is analogous.

\end{proof}

\begin{lemma}
The only couples of permutations $(\sigma,\pi)$ with $\sigma(1)=1$ such that both $(\sigma(j+1), \sigma(i))$ have the same orientation  as $(\pi(j), \pi(i))$  and $(\pi(j), \pi(i-1))$ and $(\sigma(j), \sigma(i))$, for each $i, j\in [N]$ are $(\tilde{\sigma},\tilde{\pi})$ and its dual $(\tilde{\sigma},\tilde{\pi})^\star$.
\end{lemma}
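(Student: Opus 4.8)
The plan is to first turn the geometric ``same orientation'' requirement into a purely combinatorial one. Since both point sets are sorted, $(r_{\sigma(k)},r_{\sigma(l)})$ and $(b_{\pi(j)},b_{\pi(i)})$ share an orientation exactly when $\sigma(k)-\sigma(l)$ and $\pi(j)-\pi(i)$ share a sign. The hypotheses then read: for every $1\le i<j\le N$, with the cyclic conventions $\sigma(N+1)=\sigma(1)$ and $\pi(0)=\pi(N)$,
\begin{align*}
\operatorname{sgn}\!\big(\sigma(j+1)-\sigma(i)\big)&=\operatorname{sgn}\!\big(\pi(j)-\pi(i)\big),\\
\operatorname{sgn}\!\big(\sigma(j)-\sigma(i)\big)&=\operatorname{sgn}\!\big(\pi(j)-\pi(i-1)\big).
\end{align*}
I would record at the outset that the one move left undetermined by these inequalities is the global reversal $R_{1,N}$, whose two endpoints coincide ($r_{\sigma(N+1)}=r_{\sigma(1)}=r_1$) so that its orientation vanishes. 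This is exactly the freedom that leaves two admissible cycles, $\tilde h$ and its reverse, and is why the statement permits the dual $(\tilde\sigma,\tilde\pi)^\star$.

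Next I would extract local recurrences that couple the two sequences. Setting $i=j'$ in the first line gives $\operatorname{sgn}(\pi(j)-\pi(j'))=\operatorname{sgn}(\sigma(j+1)-\sigma(j'))$ for $j'<j$, so the pairwise order of the blue sequence $\{\pi(k)\}$ is read off from the red sequence shifted by one step; symmetrically, the second line expresses the order of $\{\sigma(k)\}$ through $\{\pi(k)\}$ shifted by one. Using $\sigma(1)=1$, so that $r_1$ is the global minimum of the red heights sitting at cycle position $1$, I would then prove that both cyclic sequences $k\mapsto\sigma(k)$ and $k\mapsto\pi(k)$ are \emph{unimodal}: strictly increasing from position $1$ up to a single maximum and strictly decreasing back down to position $1$. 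Concretely, I would locate the positions of the maxima $\sigma=N$ and $\pi=N$ and show, via the two displayed conditions, that any interior local minimum (a ``valley'') would force a pair of values to violate one of the lines; ruling out valleys leaves exactly one ascending run followed by one descending run for each colour, with the red and blue maxima forced to adjacent cycle positions.

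With both sequences unimodal and their peaks adjacent, the final step is rigid bookkeeping. The ascending runs of the red and blue heights must interleave as consecutive integers---reds climbing through $1,3,5,\dots$ while blues climb through $2,4,6,\dots$, and the descending runs taking the complementary parities on the way back---because any other placement would create a pair $i<j$ on the two runs whose red and blue comparisons disagree. Matching this pattern against the even/odd formulas~\reff{sigmatilde} and~\reff{pitilde} identifies the solution with $(\tilde\sigma,\tilde\pi)$, and the single remaining choice, namely whether red or blue leads on the ascending side (equivalently the traversal direction left free by the degenerate $R_{1,N}$), produces exactly the dual $(\tilde\sigma,\tilde\pi)^\star$. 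Treating $N$ even and $N$ odd separately handles the placement of the common apex.

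The main obstacle is the unimodality step: deducing it for both colours simultaneously from sign conditions that only ever compare a value at one position with a value at a \emph{neighbouring} position of the other colour. The coupling means neither sequence can be controlled in isolation, and the cyclic boundary together with the degenerate reversal must be tracked carefully so that exactly two configurations---and not more---survive. I expect an induction that peels off the two apex points $r_N,b_N$ and reduces to the same statement for $N-1$ to be the cleanest way to organise this, with the even/odd split appearing only in the base cases.
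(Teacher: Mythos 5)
Your translation of the orientation conditions into sign conditions, your identification of the target structure (both $k\mapsto\sigma(k)$ and $k\mapsto\pi(k)$ unimodal with adjacent peaks, interleaving by parity), and your observation that the reversal with coincident endpoints is the degeneracy responsible for the dual solution are all correct and consistent with what the lemma asserts. However, the proposal has a genuine gap at exactly the point you yourself flag as ``the main obstacle'': the unimodality/adjacency step is never actually proved. You say you ``would locate the positions of the maxima'' and ``show \dots that any interior local minimum would force a pair of values to violate one of the lines,'' and then that you ``expect an induction that peels off the two apex points $r_N,b_N$'' to work --- but none of this is carried out. That step is not routine: the two sign conditions couple $\sigma$ at position $j+1$ to $\pi$ at position $j$ and vice versa, so ruling out a valley in one colour requires simultaneously controlling the other, and this is precisely where all the work lies. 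Moreover, your proposed top-down reduction has its own unaddressed prerequisites: you must first prove that $r_N$ and $b_N$ are adjacent in any admissible cycle, and then verify that after deleting them and splicing, the induced pair of permutations on $N-1$ points still satisfies both orientation hypotheses (including at the new junction). Neither is established.

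For comparison, the paper resolves this by a bottom-up, two-ended induction rather than a top-down peeling. It first pins down the neighbourhood of the extremes: starting from $r_{\sigma(1)}=r_1$, an orientation violation forces $\pi(N)=1$, i.e.\ $b_1$ must sit immediately before $r_1$ in the cycle (the alternative, $b_1$ immediately after $r_1$, is what yields the dual). Then, assuming the cycle has the form $(r_1 b_2 r_3 \cdots x_k\, C\, y_k \cdots b_3 r_2 b_1)$, it shows that $y_{k+1}$ must be the point immediately following $x_k$ and $x_{k+1}$ the point immediately preceding $y_k$, each time by exhibiting the explicit pair of couples whose orientations would otherwise disagree. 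This grows the criss-cross structure one point at each end per step until $C$ is exhausted, and is the concrete argument your sketch would need to supply (in your language, it proves unimodality and the parity interleaving simultaneously, without ever needing to locate the maxima first). As written, your proposal is a plausible plan whose decisive step is missing.
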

\begin{proof}
We have to start  our Hamiltonian cycle from $r_{\sigma(1)} = r_1$.  
Next we look at $\pi(N)$,  if we assume now that $\pi(N)>1$, there will be  a $j$ such that our cycle would have the form $(r_1 C_1 r_{\sigma(j)} b_1 C_2   b_{\pi(N)})$, if we assume $j>1$ then $(1, \sigma(j))$ and $(\pi(N),1)$ have opposite orientation, so that  necessarily $\pi(N)=1$. In the case $j=1$  our Hamiltonian cycle is of the form $(r_1 b_1 C)$, that is $(b_1 C r_1)$, and this is exactly of the other form if we exchange red and blue points.
We assume that it is of the form $(r_1 C b_1)$; the other form would give, at the end of the proof, $(\tilde{\sigma},\tilde{\pi})^\star$.
\\
Now we shall proceed by induction. Assume that our Hamiltonian cycle is of the form $(r_1 b_2 r_3 \cdots x_k C y_k \cdots b_3 r_2 b_1)$ with $k<N$, where $x_k$ and $y_k$ are, respectively, a red point and a blue point when $k$ is odd and  viceversa when $k$ is even. 
Then $y_{k+1}$ and $x_{k+1}$ must be in the walk $C$. 
If $y_{k+1}$ it is not the point on the right of $x_k$ the cycle has the form 
$(r_1 b_2 r_3 \cdots x_k y_s C_1 y_{k+1} x_l \cdots y_k \cdots b_3 r_2 b_1)$ 
but then $(x_l , x_k)$ and $(y_{k+1}, y_s)$ have opposite orientation, 
which is impossible, so that 
$s=k+1$, that is the point on the right of $x_k$. Where is $x_{k+1}$? If it is not the point on the left of $y_k$ the cycle has the form $(r_1 b_2 r_3 \cdots x_k y_{k+1} \cdots y_l x_{k+1} C_1 x_s \cdots y_k \cdots b_3 r_2 b_1)$, but then $(x_s, x_{k+1})$ and $(y_k, y_l)$ have opposite orientation, which is impossible, so that $s =k+1$, that is the point on the left of $y_k$. We have now shown that the cycle has the form  $(r_1 b_2 r_3 \cdots y_{k+1} C x_{k+1} \cdots b_3 r_2 b_1)$ and can proceed until $C$ is empty.
\end{proof} 
The case with $N=3$ points is explicitly investigated in appendix~\ref{app:n3}.

Now that we have understood what is the optimal Hamiltonian cycle, we can look in more details at what are the two matchings which enter in the decomposition we used in~\reff{dec}.
As $\tilde{\pi} = \tilde{\sigma} \circ I$ we have that
\be
I = \tilde{\sigma}^{-1} \circ \tilde{\pi} = \tilde{\pi}^{-1} \circ \tilde{\sigma}.
\ee
As a consequence both permutations associated to the matchings appearing in~\reff{dec} for the optimal Hamiltonian cycle are involutions:
\begin{subequations}
\begin{align}
\begin{split}
\tilde{\mu}_1 \equiv \tilde{\pi} \circ \tilde{\sigma}^{-1} & = \tilde{\sigma} \circ I \circ \tilde{\sigma}^{-1} = \tilde{\sigma} \circ  \tilde{\pi}^{-1} \\
& = \left[ \tilde{\pi} \circ \tilde{\sigma}^{-1}\right]^{-1}\label{m1}
\end{split}
\\
\begin{split}
\tilde{\mu}_2 \equiv \tilde{\pi} \circ \tau^{-1} \circ \tilde{\sigma}^{-1} & = \tilde{\sigma} \circ I \circ \tau^{-1} \circ I \circ \tilde{\pi}^{-1} \\ 
& = \left[\tilde{\pi} \circ \tau^{-1} \circ \tilde{\sigma}^{-1} \right]^{-1}  \label{m2},
\end{split}
\end{align}
\end{subequations}
where we used~\reff{Itau}.
This implies that those two permutations have at most cycles of period two, a fact which reflects a symmetry by exchange of red and blue points.

When $N$ is odd it happens that
\be
I \circ \tilde{\sigma} \circ I = \tilde{\sigma} \circ \tau^{-\frac{N-1}{2}},
\ee
so that
\begin{align}
\begin{split}
I \circ \tilde{\pi} \circ I & = I \circ \tilde{\sigma} \circ I \circ I = \tilde{\sigma} \circ \tau^{-\frac{N-1}{2}} \circ I\\ 
& = \tilde{\pi} \circ I \circ \tau^{-\frac{N-1}{2}} \circ I = \tilde{\pi} \circ \tau^{\frac{N-1}{2}} \,.
\end{split}
\end{align}
It follows that the two permutations in~\reff{m1} and~\reff{m2} are conjugate by $I$
\be
I \circ \tilde{\pi} \circ \tau^{-1} \circ \tilde{\sigma}^{-1} \circ I  =  \tilde{\pi}\circ \tau^{\frac{N-1}{2}} \circ \tau \circ \tau^{\frac{N-1}{2}} \circ \tilde{\sigma}^{-1} = \tilde{\pi} \circ \tilde{\sigma}^{-1}
\ee
so that, in this case, they have exactly the same numbers of cycles of order 2.
Indeed we have
\begin{subequations}
\begin{align}
\tilde{\mu}_1 = 
& \, (2,1,4,3,6, \dots , N-1,N-2,N)\\
\tilde{\mu}_2 = 
&  \, (1,3,2,5,4, \dots N, N-1)
\end{align} 
\end{subequations}
and they have $\frac{N-1}{2}$ cycles of order 2 and 1 fixed point. See Fig.~\ref{N5}  for the case $N=5$.
\begin{figure}
	\centering
	\includegraphics[width=0.6\columnwidth]{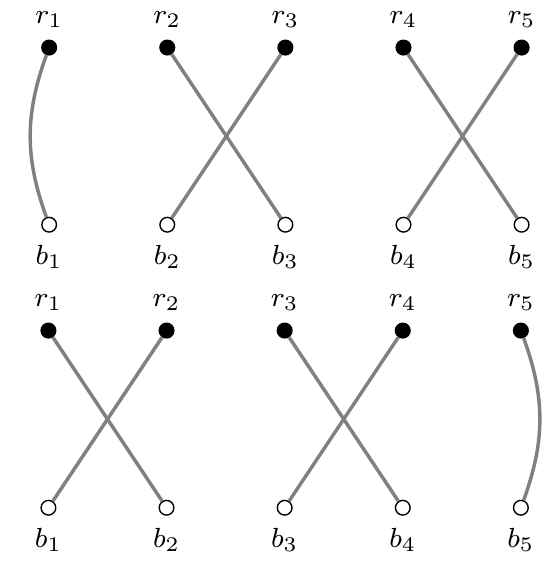}
\caption{Decomposition of the optimal Hamiltonian cycle $\tilde{h}$ for $N=5$ in two disjoint matchings $\tilde{\mu}_2$ and $\tilde{\mu}_1$.} \label{N5}
\end{figure}

In the case of even $N$ the two permutations have not the same number of cycles of order 2, indeed one has no fixed point and the other has two of them. More explicitly
\begin{subequations}
\begin{align}
\tilde{\mu}_1 = 
& \, (2,1,4,3,6, \dots ,N,N-1)\\
\tilde{\mu}_2 = 
&  \, (1,3,2,5,4, \dots N-1,N-2, N)
\end{align} 
\end{subequations}
See Fig.~\ref{N4} for the case $N=4$.

\begin{figure}
	\centering
	\includegraphics[width=0.5\columnwidth]{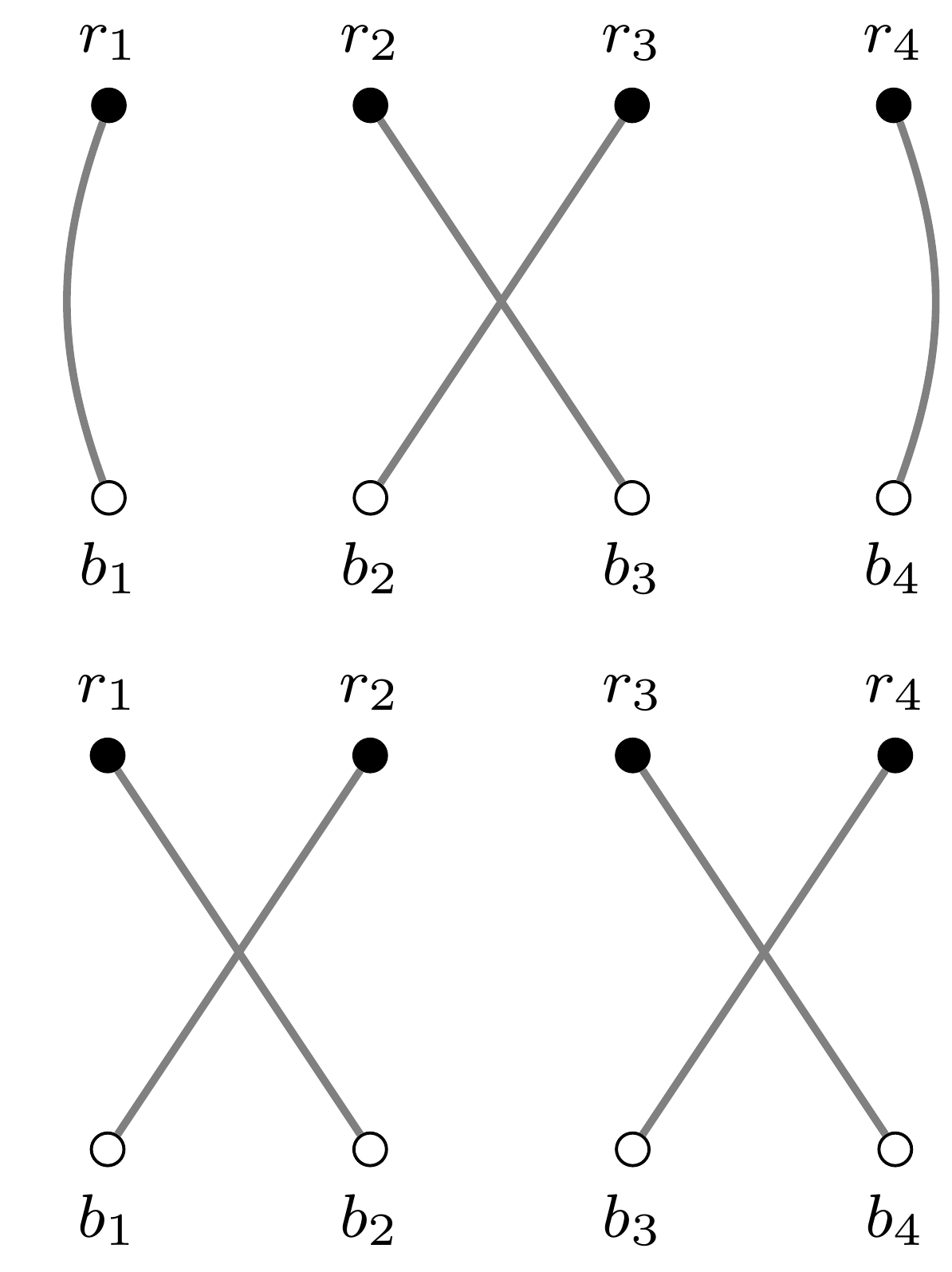}
\caption{Decomposition of the optimal Hamiltonian cycle $\tilde{h}$ for $N=4$ in he two disjoint matchings $\tilde{\mu}_2$ and $\tilde{\mu}_1$.}\label{N4}
\end{figure}

\section{Evaluation of the cost}\label{sec:costev}

Here we will evaluate the cost of the optimal Hamiltonian cycle $\tilde{h}$ for $\mathcal{K}_{N,N}$,
\begin{align}
\begin{split}
E_N(\tilde{h}) & = |r_1-b_1|^p + |r_N-b_N|^p \\
& +\sum_{i=1}^{N-1}\left[ |b_{i+1} - r_i|^p  + |r_{i+1} - b_i|^p \right] \label{EN}.
\end{split}
\end{align}
Assume that both red and blue points are chosen according to the law $\rho$ and let
\be
\Phi_\rho(x) := \int_0^x ds \, \rho(s) 
\ee
be its {\em cumulative}. The probability that, chosen $N$ points at random, the $k$-th is in the interval $(x, x + dx)$ is given by
\begin{align}
\Pr{}_\rho\left[x_k\in d x\right] = \, &  k\, \binom{N}{k} \Phi_\rho^{k-1}(x) \left[1- \Phi_\rho(x)\right]^{N-k}  \rho(x)\, d x \, .
\label{x_k}
\end{align}
In particular for $k=1$
\be
\Pr{}_\rho\left[x_1\in d x\right] = \,   N\,  \left[1-\Phi_\rho(x)\right]^{N-1}  \rho(x)\, d x \, .\label{lp1}
\ee
and $k=N$
\be
\Pr{}_\rho\left[x_N\in d x\right] = \,   N\,  \Phi_\rho^{N-1}(x)   \rho(x)\, d x \, .\label{lpd}
\ee
Given two sequences of $N$ points, the probability for the difference $\phi_k$ in the position  between the $(k+1)$-th and the $k$-th points is
\begin{align} 
\begin{split}
& \Pr{}_\rho\left[\phi_k\in d \phi\right] = k(k+1)\, \binom{N}{k} \, \binom{N}{k+1} \,d\phi_k \\
& \int dx \,dy  \,   \rho(x)\, \rho(y) \delta(\phi_k - y +x) \, \Phi_\rho(y) \, \left[1- \Phi_\rho(x)\right] \\
& \left[\Phi_\rho(x) \Phi_\rho(y) \right]^{k-1} \left[\left(1- \Phi_\rho(x)\right)\left( 1- \Phi_\rho(y)\right)\right]^{N-k-1}  \, . \label{phi_k}
\end{split}
\end{align}
Let us now focus on the simple case in which the law $\rho$ is flat, then $\Phi_\rho(x) = x$.
\be
\begin{aligned}
\overline{| r_1 - b_1|^p} & = N^2\, \int_0^1 dx\, dy\,  [(1-x)(1-y)]^{N-1} |x-y|^p \\
 & = N^2\,  \int_0^1 dx\, dy\, (x y)^{N-1} |x-y|^p \\
 & = \overline{| r_N - b_N|^p}.
\end{aligned}
\ee
For $p = 2$
\be 
\overline{| r_1 - b_1|^2} 
= \frac{ 2 N }{ (N+1)^2 (N+2) }
\ee
and
\begin{align}
\begin{split}
& \overline{ |b_{k+1} - r_k|^2} = \overline{ |r_{k+1} - b_k|^2} = k(k+1)\, \binom{N}{k} \, \binom{N}{k+1}  \\
& \int_0^1 dx\, dy\, (x-y)^2 y (1-x) (xy)^{k-1}[(1-x)(1-y)]^{N-k-1}  \\
& = \frac{ 2 (k+1) (N-k+1)}{(N+1)^2 (N+2)} 
\end{split}
\end{align}
and
\be
\sum_{k=1}^{N-1} \frac{ 2 (k+1) (N-k+1)}{(N+1)^2 (N+2)}  = \frac{1}{3} \frac{(N+6)(N-1)}{(N+1)(N+2)}\, .
\ee
In conclusion, the average cost for the flat distribution and $p=2$ is exactly
\be
\overline{E_N^{(2)}} = \frac{2}{3} \frac{ N^2 + 4 N  - 3}{(N+1)^2} \, . \label{exact}
\ee
If we recall that for  the assignment the average optimal total cost is exactly $\frac{1}{3}\frac{N}{N+1}$, the difference between the average optimal total cost of the bipartite TSP and twice the assignment is
\be
 \frac{2}{3} \left[ \frac{ N^2 + 4 N  - 3}{(N+1)^2} - \frac{N}{N+1} \right] = \frac{1}{3} \frac{N-1}{(N+1)^2} \geq  0
 \ee
and vanishes for infinitely large $N$. Remark that the limiting value is reached from above for the TSP and from below for the assignment.
We plot in Fig.~\ref{plot} the numerical results of the average optimal cost for different number of points.

It is also interesting to look at the contribution from the two different matchings in which we have subdivided the optimal Hamiltonian cycle.
In the case of $N$ odd we have for one of them the average cost
\be
\frac{ 2 N }{ (N+1)^2 (N+2) }+  2 \sum_{k=1}^{\frac{N-1}{2} } \frac{ 4 k (N-2 k+2)}{(N+1)^2 (N+2)}  = \frac{1}{3} \frac{ N^2 + 4 N  - 3}{(N+1)^2}
\ee
and also for the other
\begin{align}
\begin{split}
&\frac{ 2 N }{ (N+1)^2 (N+2)} +  2 \sum_{k=1}^{\frac{N-1}{2}} \frac{ 2 (2 k+1) (N-2 k+1)}{(N+1)^2 (N+2)}  \\
& = \frac{1}{3} \frac{ N^2 + 4 N  - 3}{(N+1)^2} \, .
\end{split}
\end{align}
In the case of $N$ even we have for the matching with two fixed points the average cost
\begin{align}
\begin{split}
& \frac{ 4 N }{ (N+1)^2 (N+2) } +  2 \sum_{k=1}^{\frac{N-2}{2} } \frac{ 2 (2 k +1)  (N-2 k+1)}{(N+1)^2 (N+2)} \\
& = \frac{1}{3} \frac{ N^2 + 4 N  - 6}{(N+1)^2} \,,
\end{split}
\end{align}
while for the other with no fixed points
\be
2 \sum_{k=1}^{\frac{N-2}{2} } \frac{ 4 k   (N-2 k+2)}{(N+1)^2 (N+2)}  = \frac{1}{3} \frac{ N^2 + 4 N}{(N+1)^2},
\ee
which then has a cost higher at the order $N^{-2}$.

\begin{figure}
\centering
\includegraphics[width=0.95\columnwidth]{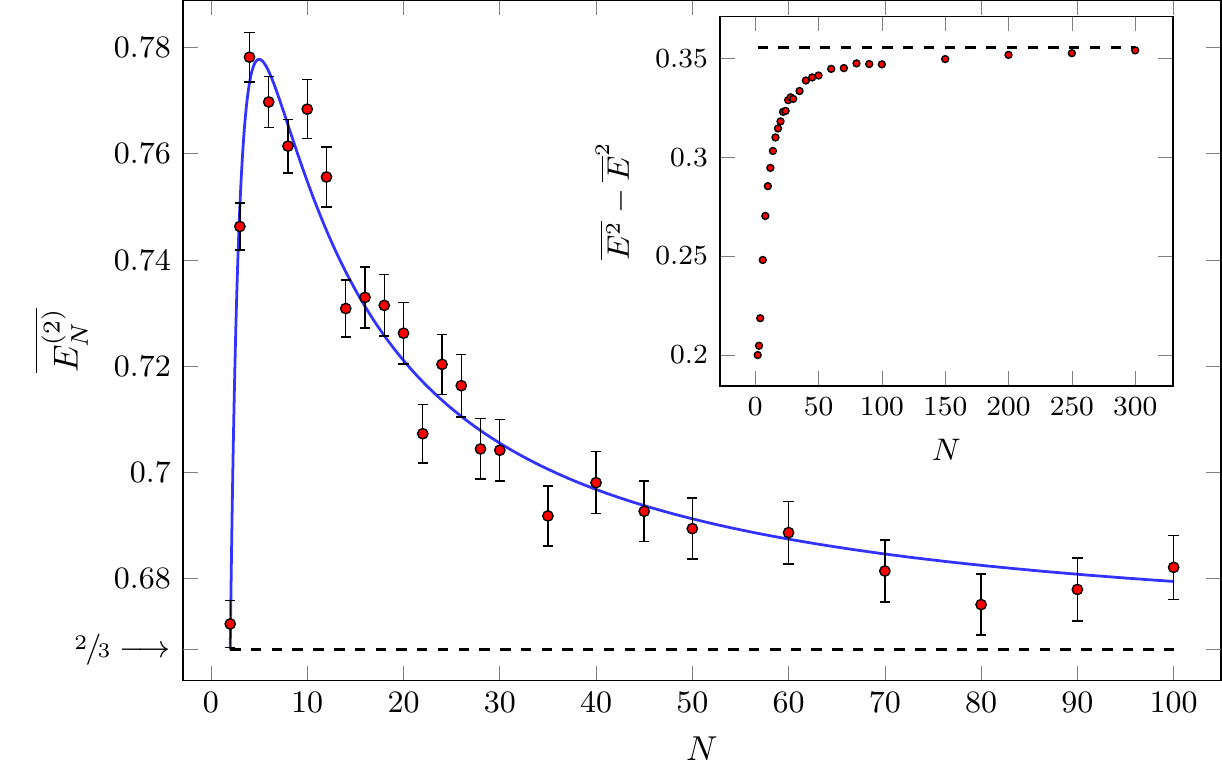}
\caption{Numerical results for $\overline{E_N^{(2)}}$ for several values of $N$. The continuous line represents the exact prediction given in~\reff{exact} and the dashed line gives the value for infinitely large $N$. For every $N$ we have used $10^4$ instances. In the inset we show the numerical results for the variance of the cost $E_N^{(2)}$ obtained using the exact solution provided by (\ref{sigmatilde}) and (\ref{pitilde}). The dashed line represents the theoretical large $N$ asymptotic value. Error bars are also plotted but they are smaller than the mark size.}\label{plot}
\end{figure}

%

\section{Asymptotic analysis for the optimal average cost}\label{sec:asymptotic}


Motivated by the preceding discussion, one can try to perform a more refined analysis in the thermodynamic limit. In the asymptotic regime of large $N$, in fact, only the term with a sum on $i$ in~\reff{EN} will contribute, and each of the two terms will provide an equal optimal matching contribution. Proceeding as in the case of the assignment ~\cite{Caracciolo:159, Caracciolo:160}, one can show that the random variables $\phi_k$ defined above Eq.~(\ref{phi_k}) converge (in a weak sense specified by Donsker's theorem) to $\phi(s)$, which is a difference of two Brownian bridge processes~\cite{Caracciolo:169}.

One can write the re-scaled average optimal cost as
\begin{equation}
\overline{E_p} \equiv \lim_{N\to \infty} N^{\frac{p}{2}-1} \, \overline{ E_N^{(p)} } 
\label{Ep}
\end{equation}
where we have denoted with a bar $\overline{\; \cdot \;}$ the average over all the 
instances. 
By starting at finite $N$ with the representation~\reff{phi_k},
the large $N$ limit  can be obtained setting $k=Ns+\frac{1}{2}$ and introducing the variables $\xi$, $\eta$ and $\varphi$ such that
\begin{equation}
 x=s+\frac{\xi}{\sqrt N},\quad y=s+\frac{\eta}{\sqrt N},\quad \phi_k=\frac{\varphi(s)}{\sqrt N},
\end{equation}
in such a way that $s$ is kept fixed when $N\to +\infty$. Using the fact that
\begin{equation}
 \Phi_\rho^{-1}(x)\approx \Phi_\rho^{-1}\left(s+\frac{\xi}{\sqrt N}\right)=\Phi_\rho^{-1}(s)+\frac{\xi}{\sqrt N\left(\rho\circ\Phi_\rho^{-1}\right)(s)},
\end{equation}
we obtain, at the leading order,
\begin{equation}
\begin{split}
 \Pr & \left[\varphi(s)\in  d\varphi\right]= \\
=&\,d\varphi\iint\delta\left(\varphi-\frac{\eta-\xi}{\rho\left(\Phi_\rho^{-1}(s)\right)}\right)\frac{\exp\left(-\frac{\xi^2+\eta^2}{2 s(1-s)}\right)}{2\pi s(1-s)}d\xi \,d\eta\\
=&\,\frac{\left(\rho\circ\Phi_\rho^{-1}\right)(s)}{\sqrt{4 \pi s(1-s)}}\exp\left\{-\frac{\left[\left(\rho\circ\Phi_\rho^{-1}\right)(s)\right]^2}{4 s(1-s)} \varphi ^2\right\}d\varphi, \label{varphi-d}
\end{split}
\end{equation}
that implies that
\begin{equation}
\begin{split}
\overline{E_p} & =2 \int_0^1 \overline{ \left| \varphi(s)\right|^p } \, \mathrm{d}s \\
&=2 \, \int_0^1d s\frac{s^\frac{p}{2}(1-s)^\frac{p}{2}}{\left[\left(\rho\circ\Phi_\rho^{-1}\right)(s)\right]^{p}}\int_{-\infty}^{+\infty}d\varphi\,|\varphi|^p\frac{\exp\left[-\frac{\varphi^2}{4}\right]}{\sqrt{4 \pi}}\\
 &=\frac{2^{1+p}}{\sqrt\pi}\Gamma\left(\frac{p+1}{2}\right)\int_0^1d s\frac{s^\frac{p}{2}(1-s)^\frac{p}{2}}{\left[\left(\rho\circ\Phi_\rho^{-1}\right)(s)\right]^{p}}
 \label{costoas}\\
 &=\frac{2^{1+p}}{\sqrt\pi}\Gamma\left(\frac{p+1}{2}\right)\int_0^1 d x \frac{\Phi_\rho^\frac{p}{2}(x)(1-\Phi_\rho(x))^\frac{p}{2}}{\rho^{p-1}(x)}.
\end{split}
\end{equation}
In the particular case of a flat distribution the average cost converges to
\begin{equation}
\overline{E_p}  = \frac{2^{1+p}}{\sqrt{\pi}}\,  \Gamma\left(\frac{p+1}{2}\right) \, 
\int_0^1 ds\, \left[s (1-s)\right]^{\frac{p}{2}}  = 2\,  \frac{\Gamma\left(\frac{p}{2}+1\right) }{p+1}
\end{equation}
which is two times the value of the optimal matching. For $p=2$ this gives $\overline{E_2}=2/3$, according to exact result~\reff{exact}. 
Formula~\reff{varphi-d} becomes
\begin{equation}
p_s(x) = \overline{\delta(\varphi(s)-x)} = \frac{e^{-\frac{x^2}{4s(1-s)}}}{\sqrt{4 \pi s(1-s)}}
\end{equation}
and similarly, see for example~\cite[Appendix A]{Caracciolo:160}, it can be derived that the joint probability distribution $p_{t,s}(x,y)$ for $\varphi(s)$ is (for $t<s$) a bivariate Gaussian distribution 
\begin{align}
\begin{split}
p_{t,s}(x,y) & = \overline{\delta(\varphi(t)-x) \, \delta(\varphi(s)-y)} \\
& = \frac{e^{-\frac{x^2}{4t}-\frac{(x-y)^2}{4(s-t)} - \frac{y^2}{4(1-s)} } }{4 \pi \sqrt{t(s-t)(1-s)}}.
\end{split}
\end{align}
This allows to compute, for a generic $p>1$, the average of the square of the re-scaled optimal cost 
\be
\label{eq:costvariance}
\overline{E^2_p}=
4 \int_0^1 \!dt \int_0^1 ds \, \overline{\left|\varphi(s)\right|^p \left|\varphi(t)\right|^p},
\ee
which is 4 times the corresponding one of a bipartite matching problem. 
%
In the case $p=2$, the average in Eq.~(\ref{eq:costvariance}) can be evaluated by  using the Wick theorem for expectation values in a Gaussian distribution
\be
\overline{E^2_2}=
4 \int_0^1 \!ds \int_0^s \!dt 
\int_{-\infty}^{\infty}\!dx\,dy\,p_{t,s}(x,y) \, x^2 y^2=
\frac{4}{5} \,,
\ee
and therefore
\be
\overline{E^2_2}-\overline{E_2}^2 = \frac{16}{45} = 0.3\bar{5}.
\ee
This result is in agreement with the numerical simulations (see inset of Fig.~\ref{plot}) and proves that the re-scaled optimal cost is not a self-averaging quantity.

\section{Conclusion and perspectives}\label{sec:conclusions}
In this work we studied the random Euclidean bipartite TSP in one dimension using a weight function which is a power $p$ of the Euclidean distance between red and blue points.
The complete bipartite graph is a special case of a more general problem.
The motivation of this choice is double: on one hand in the one dimensional case we have been able to address clearly the connection between this problem and the assignment and on the other hand we expect the bipartite TSP to be more easily tractable than its monopartite counterpart in more than one dimension. Travelling salesman problems on bipartite graphs may also turn out in practical situations (for instance, a vehicle needing to visit a set of destinations and a set of charging stations).
We provide an explicit solution in the convex case $p>1$, giving the best cycle for each disorder instance of the problem. This allowed us to compute explicitly the average optimal cost when $p=2$ and for every number of points $N$. Interestingly, the value of average optimal cost turned out to be twice the average optimal cost of the assignment problem. In the continuum limit we were also able to find the average optimal cost for generic exponent $p$, using the relation of the one-dimensional assignment with the Brownian bridge process \cite{Caracciolo:159}. In the same thermodynamic limit we computed the variance of the distribution of the optimal costs; since we get a non-vanishing result, we deduce that the average optimal cost is not a self-averaging quantity. This feature is present also in the case of the assignment problem, where the average optimal cost has been shown to be self-averaging only in $d>2$~\cite{Houdayer1998}.

In the field of combinatorial optimization problems, especially in mean field cases (i.e., where the random variables are not correlated), the theory of spin glasses and disordered systems can be used to calculate statistical properties of the optimal solution analytically \cite{mezard1987spin}. In such cases this approach also sheds light on the design of new algorithms to find solutions \cite{mezard2009information}. However, it is not clear in general how to apply these techniques (beyond expanding around the mean field case \cite{mezard1988euclidean,lucibello2017one}), when correlations play an important role, as happens when the graph is embedded in Euclidean spaces. For other problems besides the TSP, analysis of the one-dimensional case has enabled progress in the study of higher-dimensional cases \cite{Caracciolo:163}. As a consequence, a relevant question is whether the relations we obtained in one dimension continue to exist also in $d > 1$, where the bipartite TSP is an NP-complete problem. Recently, we computed exactly the cost and a two-point correlation function in $d=2$ for the assignment problem~\cite{Caracciolo:158, Caracciolo:162, Caracciolo:163}. The investigation of the connections between these two combinatorial optimization problems is material for future work.


\begin{figure}[t]
	\centering
	\includegraphics[width=0.7\columnwidth]{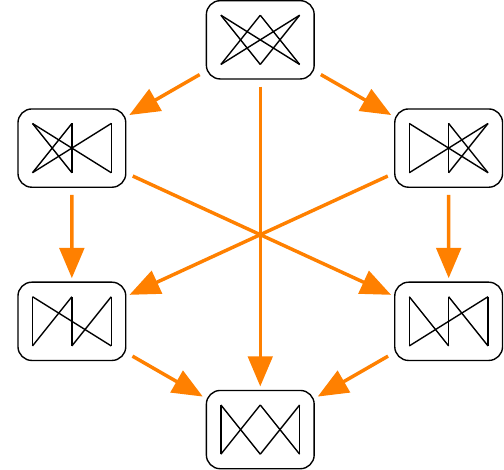}
	\caption{The whole diagram describing the $N=3$ case. In the squared boxes the various cycle configurations are represented. Lower boxes correspond to lower costs. All the possible moves suggested in Lemma~\ref{lemma1} are represented by orange arrows. }\label{fig_poset}
\end{figure}

\section*{Acknowledgments}
The authors are grateful to Riccardo Capelli for useful advices regarding the simulations performed.
\appendix

\section{The case $N=3$}\label{app:n3}

In the case $N=2$ there is only one Hamiltonian cycle, that is $\tilde{h}$. The first nontrivial case is $N=3$. There are 6 Hamiltonian cycles. If we fix the starting point to be $r_1$ there are only two possibilities for the permutation $\sigma$ of the red points, that is $(1,2,3)$ and $(1,3,2)$. One is the dual of the other. We can restrict to the $(1,3,2)$ by removing the degeneracy in the orientation of the cycles. Indeed $\tilde{\sigma}$ is exactly $(1,3,2)$ according to~\reff{sigmatilde}. With this choice the 6 cycles are in correspondence with the permutations $\pi \in \mathcal{S}_3$ of the blue points. We sort in increasing order both the blue and red points. We have
\begin{align}
\begin{split}
E(\pi) & = |r_1-b_{\pi(1)}|^p + |r_1-b_{\pi(3)}|^p +  |r_3-b_{\pi(2)}|^p \\
& + |r_3-b_{\pi(1)}|^p + |r_2-b_{\pi(3)} |^p  + |r_2-b_{\pi(2)}|^p \, .
\end{split}
\end{align}

The optimal solution is $\tilde{\pi} = (2,3,1)$. The permutations $(1,3,2)$ and $(3,2,1)$ have always a grater cost than $\tilde{\pi}$, indeed the corresponding cycles are $(r_1 b_1 \textcolor{orange}{r_3b_3r_2}b_2)$ and $(r_1\textcolor{orange}{b_3r_3b_2}r_2b_1)$, where we have colored in orange the path that, according to Lemma~\ref{lemma1}, can be reversed to lower the total cost. Doing this we obtain the optimal cycle in both cases. 
Notice that, since we can label each cycle using only the $\pi$ permutation, we can restrict ourself to moves that only involve blue points. Since there are three blue points, these moves will always reverse paths of the form $b_ir_jb_k$, so they correspond simply to a swap in the permutation $\pi$. Therefore our moves cannot be used to reach the optimal cycle from every starting cycle. A diagram showing all the possible moves is shown in Fig.~\ref{fig_poset}. In conclusion, the cost function makes $\mathcal{S}_3$ a {\em poset} with an absolute minimum and an absolute maximum.
The permutation $(2,3,1)$ is preceded by both $(1,3,2)$ and $(3,2,1)$, which cannot be compared between them, but both precede $(1,2,3)$ and $(3,1,2)$, which cannot be compared between them. $(2,1,3)$ is the greatest element. 

We compute the average costs for all the permutations. Using the same techniques used in section \ref{sec:costev}, we get that, for the $p=2$ case:
\begin{align}
\begin{split}
\overline{E[(2,3,1)]}  = \frac{3}{4}  & < \overline{E[(1,3,2)]} = \overline{E[(3,2,1)]} = \frac{7}{8} \\
& < \overline{E[(1,2,3)]} = \overline{E[(3,1,2)]} = \frac{9}{8} \\
& < \overline{E[(2,1,3)]} = \frac{5}{4}\, .
\end{split}
\end{align}
\vspace{0.5cm}

\bibliography{AssignmentANDTsp}

\end{document}